\begin{document}

\preprint{MIT-CTP 4134}

\title{QMA-complete problems for stoquastic Hamiltonians and Markov matrices}

\author{Stephen P. Jordan}

\affiliation{Institute for Quantum Information, California Institute of Technology,
Pasadena, CA 91125}

\author{David Gosset}

\affiliation{Center for Theoretical Physics, Massachusetts Institute of Technology,
77 Massachusetts Ave, 6-304, Cambridge, MA 02139 }

\author{Peter J. Love}

\affiliation{Department of Physics, Haverford College, 370 Lancaster Avenue, Haverford,
PA 19041}

\affiliation{Institute for Quantum Information, California Institute of Technology,
Pasadena, CA 91125}

\bibliographystyle{plain}

\newcommand{\p}{\ddots}
\newcommand{\ud}{\mathrm{d}}
\newcommand{\bra}[1]{\langle #1|}
\newcommand{\ket}[1]{|#1\rangle}
\newcommand{\braket}[2]{\langle #1|#2\rangle}
\newcommand{\Bra}[1]{\left<#1\right|}
\newcommand{\Ket}[1]{\left|#1\right>}
\newcommand{\Braket}[2]{\left< #1 \right| #2 \right>}
\renewcommand{\th}{^\mathrm{th}}
\newcommand{\tr}{\mathrm{Tr}}
\newcommand{\id}{\mathds{1}}
\newcommand{\ketbra}[2]{| #1 \rangle \langle #2 |}

\newenvironment{proof}{\noindent \textbf{Proof:}}{$\Box$}

\newtheorem{definition}{Definition}
\newtheorem{lemma}{Lemma}
\newtheorem{theorem}{Theorem}
\newtheorem{prop}{Proposition}

\begin{abstract}
We show that finding the lowest eigenvalue of a 3-local symmetric
stochastic matrix is QMA-complete. We also show that finding the highest
energy of a stoquastic Hamiltonian is QMA-complete and that adiabatic
quantum computation using certain excited states of a stoquastic Hamiltonian
is universal. We also show that adiabatic evolution in the ground
state of a stochastic frustration free Hamiltonian is universal. Our
results give a new QMA-complete problem arising in the classical
setting of Markov chains, and new adiabatically universal Hamiltonians
that arise in many physical systems. 
\end{abstract}

\maketitle

\section{Introduction}
\label{sec_intro}

Quantum complexity theory is the study of the capabilities and limitations
of computational devices operating according to the principles of
quantum mechanics~\cite{QCT}. Because many of the classical constructs
of computer science (e.g. circuits and clauses) are replaced by matrices,
quantum complexity theory is sometimes referred to as {\em
matrix-valued} complexity theory~\cite{Bravyi2,Bravyi_Terhal}. In addition
to its intrinsic interest this subject has many connections to issues
of practical relevance to physical science, such as the difficulty
of computing properties of quantum systems using either quantum or
classical devices~\cite{SchuchVer2009,LiuVer2007,Bravyi1}.

Perhaps the most basic classical complexity classes are P - the class
of problems solved by a deterministic Turing machine in polynomial
time, and NP - the class of problems whose verification lies
in P. It is widely believed, but not proven, that NP is strictly
larger than P~\cite{Sipser}.

Because quantum mechanics only predicts probabilities of events, the
classical deterministic classes are not the most natural place to
start if one seeks their quantum generalizations. The probabilistic
generalization of P is BPP (Bounded-error Probabilistic
Polynomial-time) - those problems solvable by a probabilistic Turing machine
in polynomial time with bounded error~\cite{BPP}. The quantum generalization
of this class is BQP (Bounded-error Quantum Polynomial-time) - the
class of problems solvable in polynomial time with bounded error on
a quantum computer~\cite{QCT}.

The classical probabilistic generalization of NP is the class MA~\cite{MAAM}.
This generalizes NP to problems whose verification is in BPP. MA stands
for Merlin-Arthur. Merlin, who is computationally unbounded but
untrustworthy, provides a proof that Arthur can verify using his BPP
machine. The class MA possesses a quantum generalization to QMA
(Quantum Merlin Arthur)~\cite{WatrousQMA,Kitaev_book,QNP}. QMA may be
intuitively understood as the class of decision problems that can be
efficiently verified by a quantum computer.

Given a classical description of a decision problem $x$ of length
$n$, the prover, Merlin, provides a witness state $\ket{\psi}$ to
the verifier, Arthur. Arthur then peforms a $\mathrm{poly}(n)$-time quantum
computation on the witness $\ket{\psi}$ and either accepts or
rejects. A problem is contained in QMA if, for all YES instances,
there exists a witness causing Arthur to accept with probability
greater than $2/3$ and for NO instances, there does not exist any
witness that causes Arthur to accept with probability greater than
$1/3$. A problem $X$ is said to be QMA-complete if it is contained in QMA
and every problem in QMA can be converted to an instance of $X$ in
classical polynomial time.

Let us consider the following question: What is the ground state energy
of a quantum system? This question lies at the core of many areas
of physical science, including electronic structure theory and condensed
matter physics. In quantum complexity theory this problem has been
formalized (originally by Kitaev~\cite{Kitaev_book}, see also, for
example,~\cite{Kempe}) as the $k$-local Hamiltonian problem. For some
systems, complexity-theoretic arguments suggest that efficient
computation of the ground state energy is likely to remain beyond
reach~\cite{barahona,Kitaev_book}.

A Hamiltonian $H$, acting on $n$ qubits, is said to be $k$-local if it
is of the form 
\[
H=\sum_{s}H_{s},
\]
where each $H_{s}$ acts on at most $k$-qubits. Thus, for example,
1-local Hamiltonians consist only of external fields acting on individual
qubits, and 2-local Hamiltonians consist of 1-local terms and pairwise
couplings between qubits. Physically realistic Hamiltonians are usually
$k$-local with small $k$, often $2$ or $1$, and each local term has
bounded norm. Note that this notion of locality has nothing to do with
spatial locality; a 2-local Hamiltonian may have long-range couplings
but they must be pairwise.\\
\\
\textbf{Problem:} $k$-local Hamiltonian \\
 \textbf{Input:} We are given a classical description of a $k$-local
 Hamiltonian $H$ on $n$ qubits $H=\sum_{j=1}^{r}H_{j}$ with
 $r=\mathrm{poly}(n)$. Each $H_{j}$ acts on at most $k$ qubits and has
 $O(1)$ operator norm. In addition we are given two constants $a$ and
 $b$ such that $0\leq a\leq b$, and $b-a=\epsilon>1/\mathrm{poly}(n)$.\\
 \textbf{Output:} If $H$ has an eigenvalue $\leq a$ answer YES.
If all eigenvalues of $H$ are $>b$ answer NO.\\
 \textbf{Promise:} the Hamiltonian is such that it will produce
either YES or NO.\\
\\
Perhaps a more obvious formulation of this problem is to
ask for an approximate ground state energy to within $\pm\epsilon$
of the correct answer. However, if one can decide the answer to $k$-local
Hamiltonian in polynomial time, then one can solve the approximation
version in polynomial time by a binary search. Thus, the approximation
problem is of equivalent difficulty to the {}``decision'' version,
to within a polynomial factor.

The problem $k$-local Hamiltonian is QMA-complete for
$k\geq2$~\cite{Kempe}. The $k$-local Hamiltonian problem is specified
by the matrix elements of the local terms of $H$. YES instances
possess the ground state as a witness. The verification circuit is the
phase estimation algorithm - a suitably formalized version of the
notion of energy measurement~\cite{Kitaev_book}. If the lowest
eigenvalue of $H$ is less than $a$ (a YES instance) then Arthur will
accept the ground state as a witness. However, if the lowest
eigenvalue of $H$ is greater than $b$ (a NO instance) then Merlin
cannot supply any eigenstate or superposition of eigenstates that
will result in a measurement of energy less than $b$. 

It is considered unlikely that $\mathrm{QMA} \subseteq \mathrm{BQP}$
and therefore it is probably impossible to construct a general quantum
(or classical) algorithm that
finds ground state energies in polynomial time. However, many Hamiltonians
studied in practice have additional restrictions beyond
$k$-locality. In particular, many physical systems are
\emph{stoquastic}, meaning that all of their off-diagonal matrix
elements are nonpositive in the standard basis. This includes the
ferromagnetic Heisenberg model, the quantum transverse Ising model,
and most Hamiltonians achievable with Josephson-junction flux
qubits~\cite{Bravyi1}. In~\cite{Bravyi1} it was shown that for any
fixed $k$, stoquastic $k$-local Hamiltonian is contained in the
complexity class AM. Thus, unless $\mathrm{QMA}\subseteq\mathrm{AM}$
(which is believed to be unlikely), stoquastic $k$-local
Hamiltonian is not QMA-complete\footnote{Like MA, the class AM is a
  probabilistic generalization of NP, see \cite{Arora}.}. It was also
shown in~\cite{Bravyi1} that, for any fixed $k$, adiabatic quantum
computation in the ground state of a $k$-local stoquastic Hamiltonian
can be simulated in $\mathrm{BPP}_{\mathrm{path}}$. Thus, unless
$\mathrm{BQP}\subseteq\mathrm{BPP}_{\mathrm{path}}$ (which is also
believed to be unlikely), such quantum computation is not
universal. The work of~\cite{Bravyi2} also defines a random stoquastic
local Hamiltonian problem which is complete for the class AM.

These results were tightened further for stoquastic frustration
free (SFF) Hamiltonians in~\cite{Bravyi_Terhal}. A local Hamiltonian is
frustration free if it can be written as a sum of terms
\begin{equation} 
H=\sum_{s}^{m}H_{s},
\end{equation}
such that 
\begin{enumerate}
\item {Each local operator $H_{s}$ is positive semidefinite} 
\item {The ground state $|\psi\rangle$ of H satisfies $H_{s}|\psi\rangle=0$
for each $s\in\{1,\dots,m\}$} 
\end{enumerate}
The work of ~\cite{Bravyi_Terhal} showed that an adiabatic evolution along
a path composed entirely of SFF Hamiltonians may be simulated by a
sequence of classical random walks - that is, the adiabatic evolution
may be simulated in the complexity class BPP.

These results were extended to the quantum $k$-satisfiability problem
in~\cite{Bravyi2,Bravyi_Terhal}. The quantum $k$-satisfiability problem
was defined in~\cite{Bravyi4} and we reproduce the definition here:\\
\\
\textbf{Problem:} Quantum $k$-SAT \\
 \textbf{Input:} A set of $k$-local projectors $\{\Pi_{q}\}$
for $q\in\{1,\dots,m\}$ where $m=\mathrm{poly}(n)$ and a parameter
 $\tilde{\epsilon}>1/\mathrm{poly}(n)$\\
\textbf{Output:} If there is a state $\ket{\phi}$ such that $\Pi_{1}\ket{\phi}=0$
for each $q\in\{1,\dots,M\}$, then this is a YES instance. If every
state $\ket{\phi}$ satisfies \[
\sum_{q=1}^{M}\bra{\phi}\Pi_{q}\ket{\phi}\geq\tilde{\epsilon}\]
 then it is a NO instance.\\
 \textbf{Promise:} The instance is either YES or NO. \\
\\
In~\cite{Bravyi2} the stoquastic restriction of quantum $k$-SAT was
shown to be contained in MA for any constant $k$, and MA-complete
for $k=6$ -- the first nontrivial example of an MA-complete
problem. In~\cite{Bravyi_Terhal} these results were extended to a
simplified form of stoquastic quantum $k$-SAT in which projectors
$\Pi_{a}$ all have matrix elements taken from the set $\{0,1/2,1\}$,
and the stoquastic constraints which appear as terms in the
Hamiltonian are of the form $H_{a}=\id-\Pi_{a}$.

The main intuition behind these results is that, by the Perron-Frobenius
theorem, the ground state of a stoquastic Hamiltonian consists entirely
of real positive amplitudes (given the appropriate choice of global
phase). Thus the ground state is proportional to a classical probability
distribution. For this reason, ground state properties are amenable
to classical random walk algorithms and certain problems such as
stoquastic $k$-local Hamiltonian fall into classical probabilistic
complexity classes such as AM. Diffusion Quantum Monte Carlo calculations for
stoquastic Hamiltonians do not suffer from the sign problem because
the negativity of the nonzero off-diagonal matrix elements guarantees
that the transition probabilities in the associated random walk are
all positive.

In this paper we first demonstrate that stoquastic
Hamiltonians may be constructed which allow universal adiabatic quantum
computation in a subspace. Then we show that the $3$-local Hamiltonian problem
is QMA-complete when restricted to stochastic Hamiltonians. These are
Hamiltonians in which all matrix elements are real and nonnegative,
and the sum of matrix elements in any row or column is one. Hence
determining the lowest eigenstate of a symmetric stochastic matrix is
QMA-hard. If $H$ is a stochastic Hamiltonian, then $-H$ is
stoquastic. Thus, our result also shows that determination of the
highest lying eigenstate of a stoquastic matrix is QMA-hard,
sharpening the intuition that it is the positivity of the ground state
which causes its local Hamiltonian problem to fall in a classical
class. We then show that universal adiabatic quantum computation is
possible in the ground state of a stochastic frustration free
Hamiltonian. Defining the computational problem stochastic $k$-SAT in
analogy to the definition of stoquastic $k$-SAT given in
\cite{Bravyi_Terhal}, we show that this problem is QMA$_{1}$-complete for
$k=6$. (QMA$_1$ is a slight variant of QMA such that in YES instances,
Arthur can be made to accept with probability one\cite{Bravyi4}.)

\section{QMA-completeness and adiabatic universality of stoquastic Hamiltonians}
\label{QMA_section}

We start with the result of~\cite{Biamonte}, which
shows that for a Hamiltonian of the form 
\begin{equation}
\label{xzform}
H_{XZ} = \sum_{i}d_{i}X_{i} + \sum_{i}h_{i}Z_{i} +
\sum_{i,j}K_{ij}X_{i}X_{j} + \sum_{i,j}J_{ij}Z_{i}Z_{j},
\end{equation}
the 2-local Hamiltonian problem is QMA-complete if the coefficients
$d_{i}$, $h_{i}$, $K_{ij}$, $J_{ij}$ are allowed to have both
signs. Furthermore, time-dependent Hamilonians that take the form
$H_{XZ}$ at all times can perform universal adiabatic quantum
computation~\cite{Biamonte}.

Starting with a Hamiltonian of the form $H_{XZ}$ on $n$ qubits we
can eliminate the negative matrix elements in each term using a technique
from~\cite{Wocjan_walks}. Essentially, the idea is that instead
of representing the group $Z_{2}$ by $\{1,-1\}$ we use its regular
representation: \[
\left\{ \left[\begin{array}{cc}
1 & 0\\
0 & 1\end{array}\right],\left[\begin{array}{cc}
0 & 1\\
1 & 0\end{array}\right]\right\} .\]
 $H_{XZ}$ can be rewritten as \begin{equation}
H_{XZ}=-\sum_{k}\alpha_{k}T_{k}\label{alpha}\end{equation}
 Where each coefficient $\alpha_{k}$ is positive and for each $k$,
$T_{k}$ is one of \begin{equation}
\pm X,\pm Z,\pm X_{i}X_{j},\pm Z_{i}Z_{j}\label{possibilities}\end{equation}
 with identity acting on the remaining qubits. For any $k$, $T_{k}$
is a $2^{n}\times2^{n}$ matrix in which each entry is either +1,-1,
or 0. From $T_{k}$ we construct a $2^{n+1}\times2^{n+1}$ matrix
$\widetilde{T}_{k}$ by making the following replacements \begin{equation}
1\to\left[\begin{array}{cc}
1 & 0\\
0 & 1\end{array}\right],\ -1 \to \left[\begin{array}{cc}
0 & 1\\
1 & 0\end{array}\right],\ 0 \to \left[\begin{array}{cc}
0 & 0\\
0 & 0\end{array}\right].\label{replacement}\end{equation}
 We can interpret $\widetilde{T}_{k}$ as acting on $n+1$ qubits.
The $2\times2$ matrices of (\ref{replacement}) act on the
ancilla qubit that has been added. Each $T_{k}$ is 2-local or 1-local,
thus each corresponding $\widetilde{T}_{k}$ is 3-local or 2-local.
Furthermore, each $\widetilde{T}_{k}$ is a permutation matrix. Let
\begin{equation}
\widetilde{H}_{XZ}=-\sum_{k}\alpha_{k}\widetilde{T}_{k}.\end{equation}
 This is a linear combination of permutation matrices with negative
coefficients. By construction, $\widetilde{H}_{XZ}$ is therefore
a 3-local stoquastic Hamiltonian. We can rewrite $\widetilde{H}_{XZ}$
as 
\begin{equation}
\widetilde{H}_{XZ}=H_{XZ}\otimes \ketbra{-}{-} - \bar{H}_{XZ} \otimes
 \ketbra{+}{+} \label{nicestoq}
\end{equation}
where
\begin{equation}
\bar{H}_{XZ}=\sum_{k}\alpha_{k}|T_{k}|,
\end{equation}
$|T_{k}|$ is the entry-wise absolute value of $T_k$, and
\begin{eqnarray*}
\ket + & = & \frac{1}{\sqrt{2}}\left(\ket 0+\ket 1\right)\\
\ket - & = & \frac{1}{\sqrt{2}}\left(\ket 0-\ket 1\right).
\end{eqnarray*}
The projectors $\Ket -\Bra -$ and $\Ket +\Bra +$ act on the ancilla
qubit.

Equation \ref{nicestoq} makes the relationship between the spectra of
$H_{XZ}$ and $\widetilde{H}_{XZ}$ clear. Let
$\ket{\psi_{0}},\ket{\psi_{1}},\ldots,\ket{\psi_{N-1}}$ denote the
eigenstates of $H_{XZ}$ with corresponding eigenvalues
$\lambda_{0}\leq\lambda_{1}\leq\ldots\leq\lambda_{N-1}$, and let
$\ket{\bar{\psi}_{0}},\ket{\bar{\psi}_{1}},\ldots,\ket{\bar{\psi}_{N-1}}$
denote the eigenstates of $\bar{H}_{XZ}$ with corresponding
eigenvalues
$\bar{\lambda}_{0}\leq\bar{\lambda}_{1}\leq\ldots\leq\bar{\lambda}_{N-1}$.
($H_{XZ}$ acts on $n$ qubits, so $N=2^{n}$.) $\widetilde{H}_{XZ}$,
which acts on a $2N$-dimensional Hilbert space, has two
$N$-dimensional invariant subspaces. The first is spanned by
$\Ket{\psi_{j}}\Ket -$ with eigenvalues $\lambda_{j}$. The second is
spanned by eigenvectors $\Ket{\bar{\psi}_{j}}\Ket +$ with eigenvalues
$-\bar{\lambda}_{j}$.

We can perform universal adiabatic quantum computation in such an
eigenstate of a stoquastic Hamiltonian. To prove this, we make use of
the universal adiabatic Hamiltonian $H_{XZ}(t)$ from \cite{Biamonte},
which at all $t$ takes the form shown in equation \ref{xzform}.
One can use the construction described above to obtain a
stoquastic Hamiltonian $\widetilde{H}_{XZ}(t)$ corresponding to each
instantaneous Hamiltonian $H_{XZ}(t)$. In this way we obtain a time
varying Hamiltonian $\widetilde{H}_{XZ}(t)$ whose spectrum in the
$\ket -$ subspace exactly matches the spectrum of $H_{XZ}(t)$, the
only difference being the addition of an ancilla qubit in the $\ket -$
state. Because $\widetilde{H}_{XZ}(t)$ has no coupling between the
$\ket -$ subspace and the $\ket +$ subspace, the adiabatic theorem may
be applied within the $\ket -$ subspace.  The relevant eigenvalue gap
is thus the same as that of $H_{XZ}(t)$, and so is the runtime.

In standard adiabatic quantum computation, the qubits are in the ground
state of the instantaneous Hamiltonian. Thus, any disturbance to the
state costs energy. This is thought to offer some protection against
thermal noise~\cite{Childs}. When performing universal adiabatic
quantum computation with $\widetilde{H}_{XZ}(t)$, the qubits are
not in the ground state. Thus, it is possible for the system to thermally
relax out of the computational state. However, this can only occur
by disturbing the ancilla qubit out of the state $\ket -$. By protecting
the ancilla qubit, one can to a large degree protect the entire computation.
Note that an energy penalty against the ancilla qubit leaving the
state $\ket -$ would be non-stoquastic. This is why the above construction
fails to prove QMA-completeness and universal adiabatic quantum computation
using the \emph{ground state} of a stoquastic Hamiltonian, as we
expect it must, based on the complexity-theoretic results
of~\cite{Bravyi1,Bravyi2,Liu,Bravyi_Terhal}.

\section{QMA-Complete problems for Markov Matrices}
\label{Markov_section}

The second main result of our paper provides an example of a
QMA-complete classical problem: finding the lowest eigenvalue of a
symmetric Markov matrix. A matrix with all nonnegative entries, such
that the entries in any given column sum to one is called a
stochastic or Markov matrix.  These matrices are named after Markov
chains, which are stochastic processes such that given the present
state, the future state is independent of the past states. Suppose a
system has $d$ possible states. Then, its probability distribution at
time $t$ is described by the $d$-dimensional vector $x_{t}$ whose
entries are nonnegative and sum to one. If the system is evolving
according to a Markov process then its dynamics are completely
specified by the equation $x_{t+1}=Mx_{t}$ where $M$ is a $d\times d$
stochastic matrix. Note that, like quantum Hamiltonians, Markov
matrices often have tensor product structure. For example, suppose we
have two independent simultaneous Markov chains governed by
$x_{t+1}=Mx_{t}$ and $y_{t+1}=Ny_{t}$. Then their joint probability
distribution $z$ is governed by $z_{t+1}=(M\otimes N)z_{t}$.

Markov processes for which the Markov matrix is symmetric correspond
to random walks on undirected weighted graphs. (Self-loops are allowed
and correspond to diagonal matrix elements.) These matrices are doubly-stochastic:
the sum of the entries in any row or column is one. By the Perron-Frobenius
theorem, the highest eigenvalue of a symmetric stochastic matrix is
one, and the corresponding eigenvector is the uniform distribution.
The eigenvalue with next largest magnitude controls the rate of convergence
of the process to its fixed point. A symmetric stochastic matrix is
Hermitian and therefore one can also think of these matrices as Hamiltonians.

To prove that finding the lowest eigenvalue of a 3-local symmetric
stochastic matrix is QMA-complete, we again use a reduction from the
QMA-complete $H_{XZ}$ Hamiltonian of~\cite{Biamonte}. We must take
the opposite sign convention from equation \ref{alpha}: 
\begin{equation}
H_{XZ}=\sum_{k}\alpha_{k}S_{k},
\end{equation}
where the coefficients $\alpha_{k}$ are the same as before (all
positive) and $S_{k}=-T_{k}$. Now define: 
\begin{equation}
\hat{H}_{XZ}=\frac{1}{N} \sum_k \alpha_k \widetilde{S}_k
\label{eq:normalize}
\end{equation}
where 
\[
N = \sum_k \alpha_k
\]
and $\widetilde{S}_{k}$ is the permutation matrix obtained by
applying the replacement rules (\ref{replacement}) to $S_{k}$. By
construction, $\hat{H}_{XZ}$ is a 3-local, symmetric, doubly stochastic
matrix. We can rewrite $\hat{H}_{XZ}$ as 
\begin{equation}
\hat{H}_{XZ}=\frac{1}{N} \left( H_{XZ}\otimes \ket{-} \bra{-} +
\bar{H}_{XZ} \otimes \ket{+} \bra{+} \right),
\end{equation}
where $\bar{H}_{XZ} = \sum_k \alpha_k |S_k|$. Thus, to determine an
eigenvalue of $H_{XZ}$ to within $\pm\epsilon$ we must find the
corresponding eigenvalue of $\hat{H}_{XZ}$ to within
$\pm\epsilon/N$. Because $H_{XZ}$ is a two-local Hamiltonian on $n$
qubits with coupling strengths of order unity, $N$ is at most
$O(n^{2})$. Thus the problem of determining the eigenvalue of
$\hat{H}_{XZ}$ corresponding to the ground state of $H_{XZ}$ to
polynomial precision is QMA-hard.

To obtain a cleaner QMA-hard problem we would like to construct a
stochastic matrix whose \emph{lowest} eigenvalue is QMA-hard to
find. To do this, let 
\[
H_{p}=(1-p)\sigma^{+}_{n+1}+p\hat{H}_{XZ}.
\]
Here $\sigma^{+}_{n+1} = \ket{+}\bra{+} = \frac{1}{2}(\id+X_{n+1})$ acts
on the ancilla qubit, thereby giving it an energy penalty of size
$(1-p)$ against leaving 
the state $\ket -$. For $0\leq p\leq1$, $H_{p}$ is a stochastic
Hamiltonian. For $p<1/3$ the energy penalty is large enough that
the highest eigenvalue in the $\Ket -$ subspace lies below the lowest
eigenvalue in the $\Ket +$ subspace. In this case the lower half
of the spectrum of $H_{p}$ is the spectrum of $H_{XZ}$ scaled
by $p/N$, and the upper half of the spectrum
of $H_{p}$ is the spectrum of $\bar{H}_{XZ}$ scaled by $p/N$
and shifted up by $1-p$. 

Thus, we can obtain the ground energy of $H_{XZ}$ to polynomial precision
by computing the lowest eigenvalue of $H_{p}$ to a higher but
still polynomial precision. This reduction proves that finding the
lowest eigenvalue of $H_{p}$ to polynomial precision is QMA-hard.
Using the quantum algorithm for phase estimation, one easily shows
that the problem of estimating the lowest eigenvalue of $H_{p}$ is
contained in QMA (see~\cite{Kitaev_book}). Thus this problem is
QMA-complete.

\section{Frustration Free Adiabatic Computation}

It was stated in \cite{Bravyi_Terhal} that universal adiabatic quantum
computation can be performed in the ground state of a 5-local
frustration-free Hamiltonian. Let $U = U_L \ldots U_2 U_1$ be a
quantum circuit acting on $n$ qubits with $L = \mathrm{poly}(n)$
gates. Let 
\begin{equation}
\label{psij}
\ket{\psi_j} = U_j \ldots U_1 \ket{0}^{\otimes n}
\end{equation}
be a state of $n$ qubits corresponding to the $j\th$ state of the time evolution of a quantum circuit specified by gates $U_j$ and 
\begin{equation}
\label{unary}
\ket{c_t} = \ket{1^{t+1} 0^{L-t}}.
\end{equation}
be a state of $L+1$ clock qubits. Bravyi and Terhal construct a parametrized 5-local Hamiltonian $H(s)$
such that the ground state $\ket{\psi(s)}$ satisfies 
\begin{eqnarray*}
\ket{\psi(1)} & = & \frac{1}{\sqrt{L+1}} \sum_{j=0}^L
\ket{\psi_j} \ket{c_j} \\
\ket{\psi(0)} & = & \frac{1}{\sqrt{L+1}} \sum_{j=0}^L \ket{0^n} \ket{c_j}.
\end{eqnarray*}
We can think of first register in $\ket{\psi(s)}$ as consisting of
``work'' qubits on which the computation happens and the second
register in $\ket{\psi(s)}$ as being a clock containing a time written
in unary.

For $s \in [0,1]$, the minimal eigenvalue gap between the ground state
and first excited state of $H(s)$ is $O(1/L^2)$. By the adiabatic
theorem~\footnote{Many versions of the adiabatic therem have been
  proven. For one example see appendix F of \cite{mythesis}.},
$1/\mathrm{poly}(L)$ eigenvalue gap ensures that given $\ket{\psi_0}$,
one obtains $\ket{\psi_1}$ by applying $H(s)$ and varying $s$ from
zero to one over $\mathrm{poly}(L)$ time. By measuring the clock
register of $\ket{\psi(1)}$, one obtains the result $\ket{1^{L+1}}$
with probability $1/(L+1)$. If this result is obtained, one finds the
output of the circuit $U$ by measuring the first register of qubits in
the computational basis. By repeating this process with $O(L)$ copies
of $\ket{\psi(1)}$ one succeeds with high probability. Alternatively,
one can pad the underlying circuit with $L$ identity gates, in which
case each trial succeeds with probability $1/2$.

The construction from \cite{Bravyi_Terhal} invokes the fact that the
spectrum of $H(1)$ is independent of the form of the gates $U_j$. By
choosing a gate set which is composed of elements of simply connected
unitary groups such as $SU(2)$ and $SU(4)$ one may construct a
continuous path connecting each gate to the identity, and use a single
parameter $s$ to transform all gates from the identity to the final
circuit at once. The Hamiltonian at $s=0$ corresponds to the identity
circuit, and its ground state is the uniform superposition of the
clock states tensored with the initial data on the work qubits. In
this ground state, the qubits of the clock register are entangled. It is
standard to design adiabatic computations such that the initial
Hamiltonian has a product state as its ground state, because such
states should be easily produced by cooling or single-qubit
measurements. In this section we construct a modified version of the
construction from \cite{Bravyi_Terhal} that satisfies this condition
and is still frustration free.

Let $c(j)$ indicate the $j\th$ clock qubit and let $w(j)$ indicate the
$j\th$ work qubit. Let
\begin{eqnarray*}
H^{\mathrm{init}}_j & = & \ket{1}\bra{1}_{w(j)} \otimes \ket{10}
\bra{10}_{c(1),c(2)} \\
H^{\mathrm{clock}}_j & = & \ket{01}\bra{01}_{c(j-1),c(j)} \\
\end{eqnarray*}
For $j\in\{1,...,L-1\}$ define 
\begin{eqnarray*}
H^{\mathrm{prop}}_j(s) & = & s
  \ket{100}\bra{100}_{c(j),c(j+1),c(j+2)} + \\
& & (1-s) \ket{110}\bra{110}_{c(j),c(j+1),c(j+2)} - \\
& & \sqrt{s (1-s)}  \left(
U_j \otimes \ket{110}\bra{100}_{c(j),c(j+1),c(j+2)} + \right. \\
& & \left. U_j^\dag \otimes \ket{100}\bra{110}_{c(j),c(j+1),c(j+2)}
\right). \\
\end{eqnarray*}
and let 
\begin{eqnarray*}
H^{\mathrm{prop}}_L(s) & = & s
  \ket{10}\bra{10}_{c(L),c(L+1)} + \\
& & (1-s) \ket{11}\bra{11}_{c(L),c(L+1)} - \\
& & \sqrt{s (1-s)}  \left(
U_L \otimes \ket{11}\bra{10}_{c(L),c(L+1))} + \right. \\
& & \left. U_L^\dag \otimes \ket{10}\bra{11}_{c(L),c(L+1)}
\right). \\
\end{eqnarray*}
It can be directly verified that each  $H^{\mathrm{clock}}_j$, $H^{\mathrm{init}}_j$, and
$H^{\mathrm{prop}}_j(s)$ is a projector. Here, for convenience, we
define $s$ so that it varies from zero to one half rather than from
zero to one as is done in \cite{Bravyi_Terhal}. Our frustration-free
Hamiltonian is the following sum of projectors:
\begin{eqnarray*}
H^{\mathrm{clock}} & = & \ket{0}\bra{0}_{c(0)} + \sum_{j=1}^L
H^{\mathrm{clock}}_j \\
H^{\mathrm{init}} & = & \sum_{j=1}^n H^{\mathrm{init}}_j \\
H^{\mathrm{prop}}(s) & = & \sum_{j=1}^L H^{\mathrm{prop}}_j(s) \\
H^{\mathrm{FF}}(s) & = & H^{\mathrm{clock}} + H^{\mathrm{init}} +
H^{\mathrm{prop}}(s)
\end{eqnarray*}

If $U_1 \ldots U_L$ are chosen from a universal set of two-qubit
gates then $H^{\mathrm{FF}}(s)$ is an efficient 5-local
frustration-free adiabatic quantum computer. To see how this
Hamiltonian achieves universal adiabatic computation, we examine the
various terms one by one. The ground state of
$H^{\mathrm{FF}}(s)$ is the simultaneous zero eigenspace of
$H^{\mathrm{clock}}$, $H^{\mathrm{init}}$, and
$H^{\mathrm{prop}}$. $H^{\mathrm{clock}}$ commutes with
$H^{\mathrm{prop}}(s) + H^{\mathrm{init}}$ and provides an energy penalty
of at least unit size if the clock register is not in one of the unary
states $\ket{c_t} = \ket{1^{t+1} 0^{L-t}}$. Thus the low lying
spectrum of $H^{\mathrm{FF}}(s)$ is strictly contained in the ground
space of $H^{\mathrm{clock}}$.

For any bit string $x \in \{0,1\}^n$ and integer $j \in
\{1,2,\ldots,L\}$, let
\[
\ket{\chi_x^j} = \left( U_j U_{j-1} \ldots U_1 \ket{x} \right) \otimes
\ket{c_j},
\]
where $\ket{c_j}$ is as defined in equation \ref{unary}. (We also define
$\ket{\chi_x^0} = \ket{x} \otimes \ket{c_0}$.) There are $2^n(L+1)$
such states and they form an orthonormal basis for the ground space of
$H^{\mathrm{clock}}$. In this basis, $H^{\mathrm{prop}}(s) +
H^{\mathrm{init}}$ takes the block-diagonal form
\[
H^{\mathrm{prop}}(s) + H^{\mathrm{init}} = \bigoplus_{x \in \{0,1\}^n} M_x
\]
where
\[
M_x = \left[ \begin{array}{cccccc}
s+|x|  & -b &    &    &    &    \\
-b & 1  & -b &    &    &    \\
   & -b & 1  & -b &    &    \\
   &    & \p & \p & \p &    \\
   &    &    & -b & 1  & -b \\
   &    &    &    & -b & 1-s
\end{array} \right]
\]
is an $L+1$ by $L+1$ matrix and $b = \sqrt{s(1-s)}$. Here $|x|$ denotes the Hamming weight of the
bit string $x$. The appearance of $|x|$ is the sole manifestation of
$H^{\mathrm{init}}$. The rest of the matrix elements all come from the
``hopping'' action of $H^{\mathrm{prop}}(s)$.

$M_{00\ldots0}$ has the unique ground state
\begin{equation}
\label{newhist}
N \sum_{j=0}^L r^j \ket{\psi_j} \ket{c_j}
\end{equation}
where $\ket{\psi_j}$ and $\ket{c_j}$ are as defined in equations
\ref{psij} and \ref{unary}, $r = \sqrt{\frac{s}{1-s}}$, and $N$ is a
normalization factor. This constitutes the ground state of
$H^{\mathrm{FF}}(s)$. The first excited state of $M_{00\ldots0}$ has
energy  $1-2 \sqrt{s (1-s)} \cos \left( \frac{\pi}{L+1}
\right)$. Because of the direct sum structure of $H^{\mathrm{FF}}(s)$,
we can apply the adiabatic theorem directly to $M_{00\ldots0}$. The
runtime of the adiabatic algorithm is thus determined by the gap
between the ground and first excited states of $M_{00\ldots0}$. This
takes its minumum at $s=1/2$, where it is equal to $1-\cos \left(
\frac{\pi}{(L+1)} \right) = O(1/L^2)$. For questions of fault
tolerance it is also useful to know the eigenvalue gap between the
ground and first excited states of the full Hamiltonian
$H^{\mathrm{FF}}(s)$. The first excited energy of $H^{\mathrm{FF}}(s)$
is equal to the ground energy of $M_{10\ldots0}$, which is $1-2
\sqrt{s(1-s)} \cos \left( \frac{\pi}{2(L+1)} \right)$. Thus the
minumum eigenvalue gap of $H^{\mathrm{FF}}(s)$ occurs at $s=1/2$ and
is equal to $1-\cos \left( \frac{\pi}{2 (L+1)} \right) = O(1/L^2)$.

By equation \ref{newhist}, the ground state of $H^{\mathrm{FF}}(0)$ is
$\ket{000\ldots} \otimes \ket{1000\ldots}$, and the ground state of
$H^{\mathrm{FF}}(1/2)$ is the same state  $\frac{1}{\sqrt{L+1}}
\sum_{j=0}^L \ket{\psi_j} \ket{c_j}$ produced by the scheme of
\cite{Bravyi_Terhal}.

\section{Stochastic Frustration Free Computation}

In \cite{Bravyi_Terhal}, Bravyi and Terhal showed that adiabatic
quantum computation in the ground state of a stoquastic
frustration free Hamiltonian can be efficiently simulated by a
classical computer. In this section we show that in contrast, one can
perform universal adiabatic quantum computation in the ground state of 
a \emph{stochastic} frustration free (StochFF) Hamiltonian
$H^{\mathrm{StochFF}}(s)$. (Alternatively, we can view this as computation
in the highest energy state of the stoquastic Hamiltonian
$-H^{StochFF}(s)$.) 

It has been shown that the two-qubit CNOT gate, together with
any one-qubit rotation whose square is not basis preserving, are
sufficient to perform universal quantum computation~\cite{Shi}. All
matrix elements in these gates are real numbers. If we choose
$U_1,\ldots,U_L$ from this gate set then $H^{\mathrm{FF}}(s)$ is a
5-local real frustration free Hamiltonian. Examining the construction
of section \ref{Markov_section} one sees that it can be applied to any
Hamiltonian with real matrix elements, and it increases the locality
by one. This construction also preserves frustration-freeness, as we will show in the next paragraph.
We can thus use this construction on $H^{FF} (s)$ to obtain a
6-local stochastic frustration free Hamiltonian whose ground state is
universal for adiabatic quantum computation

To show that the mapping of section \ref{Markov_section} preserves frustration-freeness, consider applying this mapping to a frustration free local Hamiltonian $H=\sum_{j=1}^{m}H_j$, where $H_j=\sum_{k}\alpha_{k}^{j} S_k^j$ (where each $S_k^j$ is, up to an overall sign, a tensor product of Pauli operators and each $\alpha_k^j$ is positive).
We obtain the Hamiltonian
\begin{eqnarray}
H_p&=&p\hat{H}+\left(1-p\right)\left(\frac{\id+X_{n+1}}{2}\right)\nonumber \\
&=&\sum_j \frac{N_j}{N} \left[p\hat{H_j}+\left(1-p\right)\left(\frac{\id+X_{n+1}}{2}\right)\right] \label{fr_free}
\end{eqnarray}
where $N_j=\sum_k {\alpha_k^j}$ and $N=\sum_j N_j$. When $p<\frac{1}{3}$, $H_p$ is stochastic and has a zero energy ground state with an eigenvalue gap which is $\frac{p}{N}$ times the gap of $H$. Furthermore we see from \eqref{fr_free} (and the fact that each $H_j$ is positive semidefinite) that $H_p$ is a sum of positive semidefinite operators. Hence the  Hamiltonian $H_p$ is frustration free.
\section{Generalizations}

The constructions of sections \ref{Markov_section} and
\ref{QMA_section} replace Hamiltonians with real matrix elements of
both signs by computationally equivalent Hamiltonians with real
positive matrix elements. In this section we show that this technique
can be generalized to directly replace Hamiltonians with complex
matrix elements by computationally equivalent Hamiltonians with only
real positive matrix elements. However, in the process we necessarily
introduce a two-fold degeneracy of the ground state. 

Let $H$ be an arbitrary $k$-local Hamiltonian. We may expand $H$ as
\begin{equation}
\label{pdecomp}
H = \sum_{j} \alpha_j O_j
\end{equation}
where each $O_j$ is a tensor product of $k$ or fewer Pauli
matrices and each $\alpha_j$ is positive. Each entry in each $O_j$ is
$\pm 1$ or $\pm i$. We can replace the group $\{1,i,-1,-i\}$ with its
left-regular representation
\begin{equation}
\label{freplace}
\begin{split}i & \mapsto F\\
-1 & \mapsto F^{2}\\
-i & \mapsto F^{3}\\
1 & \mapsto F^{4}\end{split}
\end{equation}
where
\begin{equation}
F=\begin{pmatrix}0 & 1 & 0 & 0\\
0 & 0 & 1 & 0\\
0 & 0 & 0 & 1\\
1 & 0 & 0 & 0\end{pmatrix} .
\end{equation}
The eigenvectors of $F$ are $\ket{v_0},\ket{v_1},\ket{v_2},\ket{v_3}$
where
\[
\ket{v_j} = \frac{1}{2}\sum_{l = 0}^3 i^{lj} \ket{l}. 
\]
The corresponding eigenvalues are
\begin{equation}
\label{eigenvalues}
F \ket{v_j} = i^j \ket{v_j}.
\end{equation}

Let $S_j$ and $A_j$ be the real and imaginary parts of $\alpha_j
O_j$. That is, $S_j$ and $A_j$ are the unique real symmetric and
anti-symmetric matrices such that
\[
\alpha_j O_j = S + iA
\]
Further let $S_j^+ = (|S_j|+S_j)/2)$ and $S_j^- = (|S_j|-S_j)/2$ and
similarly for $A_j^{\pm}$, where $|\cdot|$ denotes the entrywise
absolute value. Applying the replacement (\ref{freplace}) to $H$ and
dividing by $N = \sum_j \alpha_j$ yields the stochastic Hamiltonian
$\tilde{H}$ with the decomposition
\begin{eqnarray}
\label{tildeH}
\tilde{H} & = \frac{1}{N} & \left( H^{(0)} \otimes \ket{v_0} \bra{v_0} +
H^{(1)} \otimes \ket{v_1} \bra{v_1} + \right. \nonumber \\
 & & \left. H^{(2)} \otimes \ket{v_2} \bra{v_2} +
     H^{(3)} \otimes \ket{v_3} \bra{v_3} \right)
\end{eqnarray}
where
\begin{eqnarray*}
H^{(0)} & = & \sum_j S_j^+ + S_j^- + A_j^+ + A_j^- \\
H^{(1)} & = & \sum_j S_j^+ - S_j^- + iA_j^+ - iA_j^- \\
H^{(2)} & =& \sum_j S_j^+ + S_j^- - A_j^+ - A_j^-\\
H^{(3)} & =  & \sum_j S_j^+ - S_j^- - iA_j^+ + iA_j^-. 
\end{eqnarray*}

$H^{(1)} = H$, thus the spectrum of $\tilde{H}$ in the $\ket{v_1}$
subspace matches that of $H$ up to a normalization factor of $N$ and a
pair of extra ancilla qubits. If we write each projector
$\ket{v_j}\bra{v_j}$ in terms of the Pauli basis we obtain
\begin{eqnarray*}
\ket{v_0}\bra{v_0} & = & \Pi_+^X \Pi_+^X \\
\ket{v_1}\bra{v_1} & = & \Pi_-^X \Pi_+^Y \\
\ket{v_2}\bra{v_2} & = & \Pi_+^X \Pi_-^X \\
\ket{v_3}\bra{v_3} & = & \Pi_-^X \Pi_-^Y \\
\end{eqnarray*}
where $\Pi_\pm^a$ is the projector onto the eigenvalue $\pm 1$ eigenstate of the Pauli matrix $a$.
Thus an $X$ penalty on the first ancilla qubit will separate the
$\ket{v_1},\ket{v_3}$ subspace from the $\ket{v_0},\ket{v_2}$
subspace. So, taking, $0<p < \frac{1}{3}$, the stochastic Hamiltonian 
\[
H_p^\prime = \left(1-p\right)\left(\frac{\id+X_{n+1}}{2}\right) + p\tilde{H}
\]
has ground space spanned by $\ket{\psi^{(1)}} \ket{v_1}$ and
$\ket{\psi^{(3)}} \ket{v_3}$, where $\ket{\psi^{(1)}}$ is the ground
state of $H^{(1)}$ and $\ket{\psi^{(3)}}$ is the ground state of
$H^{(3)}$. $H^{(1)} = H$, thus $\ket{\psi^{(1)}}$ is the ground state
of $H$. $H^{(3)} = H^*$, thus $\ket{\psi^{(3)}}$ is the complex
conjugate of the ground state of $H$.

A simple argument shows that the doubling in the spectrum of $H_p^\prime$ is a necessary property for
any construction which  maps an arbitrary Hamiltonian onto a real
Hamiltonian $H_R$, where $H_R$ is equal to $H$ within a fixed 1D
subspace of the ancillas.
Suppose that we have such a map which sends an arbitrary Hamiltonian
$H$ which acts on a Hilbert space $\mathcal{H}_1$ to a real
Hamiltonian $H_R$ on a  larger Hilbert space
$\mathcal{H}_1\otimes\mathcal{H}_2$ with the property that
\begin{equation}\label{HR}
H_R=H\otimes |\phi\rangle \langle \phi|+H^{\text{other}}\otimes(\id-|\phi\rangle\langle \phi|).
\end{equation}
where the state $|\phi\rangle\in\mathcal{H}_2$ does not depend on the
particular Hamiltonian $H$ but the operator $H^{\text{other}}$ may
depend on $H$. Then for any eigenvector $|\psi\rangle$ of $H$ with
energy $E$ we have:
\begin{equation}
H_R|\psi\rangle|\phi\rangle=E|\psi\rangle|\phi\rangle
\end{equation}
Since $H_R$ is real, complex conjugating this equation gives:
\begin{equation}\label{ccHR}
H_R|\psi^\star \rangle|\phi^\star \rangle=E|\psi^\star \rangle|\phi^\star \rangle
\end{equation}
To show that doubling exists in the spectrum it is sufficient to show
that $\langle \phi|\phi^\star\rangle=0$. To prove this, first use
equation \eqref{ccHR} to obtain
\begin{equation}
\left(\id \otimes|\phi\rangle\langle \phi|\right)H_R |\psi^\star \rangle |\phi^\star \rangle=E|\psi^\star \rangle  |\phi\rangle \langle \phi|\phi^\star\rangle 
\end{equation}
Then use equation \eqref{HR} to obtain
\begin{equation}
\left(\id \otimes|\phi\rangle\langle \phi|\right)H_R |\psi^\star \rangle |\phi^\star \rangle=\left(H|\psi^\star \rangle \right) |\phi\rangle \langle \phi|\phi^\star\rangle. 
\end{equation}
Equating these expressions gives
\begin{equation}
H|\psi^\star\rangle \langle \phi|\phi^\star\rangle=E|\psi^\star \rangle \langle \phi|\phi^\star\rangle. 
\end{equation}
This must hold for all Hamiltonians $H$ and eigenstates $|\psi\rangle$
and therefore it must be the case that $\langle
\phi|\phi^\star\rangle=0$. So we have shown that the doubling in the
spectrum of $H_p^\prime$ is a necessary feature of the type of maps we
consider. For constructing universal adiabatic quantum computers the
degeneracy induced by this construction may be problematic. However,
for proving complexity-theoretic completeness results it is often
irrelevant, as we see in the next section.

\section{Stochastic $k$-SAT}

The methods of the previous section can be used to show, roughly
speaking, that deciding whether or not a Hamiltonian which is a sum of positive semidefinite stochastic operators
is frustration free is as difficult as the general problem of deciding
whether a Hamiltonian is frustration free. In this section we
formalize this by defining a problem called stochastic $k$-SAT, which we
show to be $\mathrm{QMA}_1$-complete for $k=6$.

We first recall the definition of stoquastic $k$-SAT which is given in
~\cite{Bravyi_Terhal}. \\
\\
\textbf{Problem:} Stoquastic $k$-SAT \\
 \textbf{Input:} Input: A set of $k$-local Hermitian operators $\{H_{j}\}$
for $j\in\{1,\dots,m\}$ where $m=\mathrm{poly}(n)$ and a parameter $\epsilon>1/\mathrm{poly}(n)$ 
\begin{enumerate}
\item {Each $H_{j}$ is positive semidefinite} 
\item {Each $H_{j}$ has norm which is bounded by a polynomial in $n$.} 
\item {Every $H_{j}$ is stoquastic.} 
\end{enumerate}
\textbf{Output:} If $H=\sum_j H_j$ has a zero energy ground state, then this is
a YES instance. Otherwise if every eigenstate of $H$ has energy $>\epsilon$
then it is a NO instance.\\
\textbf{Promise:} Either the ground state
of H has energy 0, or else it has energy $>\epsilon$.\\
\\
The stoquastic $k$-SAT problem is therefore the problem of deciding if a
given stoquastic Hamiltonian that is a sum of positive definite
operators is frustration free, given that either this is the case or
else its ground energy exceeds $\epsilon$ ~\cite{Bravyi_Terhal}. Note that
this definition of stoquastic $k$-SAT looks somewhat different from the
definition of quantum $k$-SAT which was given in section \ref{sec_intro}, which
was stated entirely in terms of projectors. Given an instance of
stoquastic $k$-SAT, we can define operators $\Pi_j$ which project onto
the zero eigenspaces of the $H_j$. When the Hamiltonians $H_j$ are
stoquastic, these projectors are guaranteed to have nonnegative
matrix elements in the computational basis ~\cite{Bravyi_Terhal}.
So given an instance of stoquastic $k$-SAT with Hermitian positive
semidefinite operators $H_j$, it is possible to construct another
instance of stoquastic $k$-SAT with operators $\tilde{H}_j=\{1-\Pi_j\}$
that are all projectors.

We now define a problem called stochastic $k$-SAT, which is identical
to stoquastic $k$-SAT except that condition 3 is replaced by\\

$3'$. Every $H_j$ is a stochastic matrix.\\
\\
We note that there does not appear to be an equivalence between this
definition of stochastic $k$-SAT and the corresponding definition where
all the $H_j$ are (in addition) required to be projectors. 

Given these two definitions and the foregoing map from an arbitrary
Hamiltonian to a stochastic Hamiltonian, we now show how to reduce
any instance of quantum 4-SAT to an instance of stochastic 6-SAT. Starting with 
an instance of quantum 4-SAT specified by a set of projectors
$\{\Pi_j\}$ (for $j\in{1,...,m}$) we use  the map of the previous
section (with $p=\frac{1}{3}$ for concreteness) on each projector to
obtain a set of 6-local positive semidefinite stochastic Hamiltonians
$\{H_j\}$ where 
\begin{equation}
H_j= \frac{2}{3}\left(\frac{\id+X_{n+1}}{2}\right) + \frac{1}{3}\tilde{\Pi}_j
\end{equation}
(Note that $\tilde{\Pi}_j$ refers to the operator obtained by applying
the mapping from equation \eqref{tildeH}.)  If the 4-SAT instance is
satisfiable, then the stochastic 6-SAT instance will also be
satisfiable. Define $N_{max}$ to be the maximum value of $N$ obtained for one of the terms $\Pi_j$ when using the mapping of equation \eqref{tildeH}. If the 4-SAT instance is not satisfiable then for any state $|\phi\rangle$ there is some projector $\Pi_k$ such that
$\langle \phi|\Pi_k |\phi \rangle\geq \frac{\epsilon}{m}$. If we take
the parameter $\tilde{\epsilon}$ of the stochastic 6-SAT instance to be
related to the parameter $\epsilon$ of the quantum 4-SAT instance by
$\tilde{\epsilon}=\frac{\epsilon}{3m N_{max}}$ then the stochastic 6-SAT
instance will also be unsatisfiable.   Therefore stochastic 6-SAT is
$\mathrm{QMA}_1$ hard.  Stochastic 6-SAT is contained in $\mathrm{QMA}_1$ since every
instance of stochastic 6-SAT can be mapped to an instance of quantum
6-SAT by taking projectors $\Pi_j$ which project onto everything but
the zero eigenspaces of the $H_j$.

So $\mathrm{QMA}_{1}$ completeness of stochastic 6-SAT follows from  the results of Bravyi
~\cite{Bravyi4} on quantum $k$-SAT. This is in contrast to stoquastic
$k$-SAT, which is contained in MA for every constant
$k$~\cite{Bravyi_Terhal}.

\section{QMA-completeness for excited states}
\label{excited}

The local Hamiltonian problem refers specifically to ground state
energies. Similarly, we have formulated a computational problem based
on the highest energy of a given Hamiltonian. It is natural to ask
about the complexity of estimating the $c\th$ excited state. We can
formulate this as follows. Let $H$ be a $k$-local Hamiltonian on the
Hilbert space $\mathcal{H}$ of $n$ qubits. Let $\lambda_1 \leq
\lambda_2 \leq \ldots \leq \lambda_{2^n}$ denote the eigenvalues of
$H$, with corresponding eigenvectors $\ket{\psi_1}, \ket{\psi_2},
\ldots, \ket{\psi_{2^n}}$. The $(k,c,\epsilon)$-energy problem is as
follows.\\
\\
\textbf{Problem:} $(k,c,\epsilon)$-energy\\
\textbf{Input:} We are given a classical description of $H$, an integer
$c \geq 1$, and a pair of parameters $a,b$ such that $b-a = \epsilon >
1/\mathrm{poly}(n)$.\\
\textbf{Output:} If $\lambda_c \leq a$ answer YES. If $\lambda_c \geq b$
output NO. \\
\textbf{Promise:} $H$ is such that the answer is YES or NO. \\

In this section we will show that the $(k,c,\epsilon)$-energy problem is
QMA-complete for any $c = O(1)$. Showing QMA-hardness is the easier of
the two proofs. This can be achieved as follows. Let
$H^{(a)}$ and $H^{(b)}$ be a pair of $k$-local Hamiltonians on $n$
qubits, with spectra $\lambda_1^{(a)},\ldots,\lambda_{2^n}^{(a)}$,
$\ket{\psi_1^{(a)}},\ldots,\ket{\psi_{2^n}^{(a)}}$ and
  $\lambda_1^{(b)},\ldots,\lambda_{2^n}^{(b)}$,
  $\ket{\psi_1^{(b)}},\ldots,\ket{\psi_{2^n}^{(b)}}$, respectively. Then
\[
H^{(ab)} = H^{(a)} \otimes \ket{0} \bra{0} + H^{(b)} \otimes \ket{1} \bra{1}
\]
is a $(k+1)$-local Hamiltonian on $n+1$ qubits. Its complete set of
eigenvalues is
$\lambda_1^{(a)},\ldots,\lambda_{2^n}^{(b)};\lambda_1^{(b)},\ldots,\lambda_{2^n}^{(b)}$,
with corresponding eigenvectors
$\ket{\psi_1^{(a)}}\ket{0}, \ldots,
\ket{\psi_{2^n}^{(a)}}\ket{0}; \ket{\psi_1^{(b)}}\ket{1}, \ldots,
\ket{\psi_{2^n}^{(b)}}\ket{1}$. 
To prove QMA-hardness of a low-lying excited state let $H_0$
to be a Hamiltonian such that determining whether the ground energy is
close to zero is QMA-hard. Given an integer $c$, let $d = \lceil
\log_2 c \rceil$, $P_k = \frac{1}{2} (Z_k + \id )$, and
\[
H_c = \sum_{k=0}^d 2^k P_k + \sum_{k=d+1}^n
2^{d+1} P_k - \left( c-\frac{1}{2} \right) \id
\]
$H_c$ has exactly $c$ states with negative energy, and its
lowest nonnegative eigenvalue is $\frac{1}{2}$. Thus determining the
$c\th$ excited energy of $H_c \otimes \ket{0} \bra{0} + H_0 \otimes
\ket{1} \bra{1}$ is QMA-hard. In particular, it is interesting to
note that by choosing $c=2$ we construct a Hamiltonian whose
eigenvalue gap between the ground state and first excited state is
QMA-hard to compute.

Next we show containment in QMA. The
naive protocol would be for Merlin to provide Arthur with the state
$\ket{\psi_1} \ket{\psi_2} \ldots \ket{\psi_c}$ and for Arthur to use
phase estimation to check that the $c$ registers each contain a state
of energy at most $a$. The problem is that for ``no'' instances there
are many ways for Merlin to cheat. For example if $\lambda_1 \leq a$
but $\lambda_c \geq b$, the answer is ``no'' but Merlin can provide
the state $\ket{\psi_1}^{\otimes c}$ as a supposed witness. To prevent
this, Arthur needs to somehow check that he has been given a set of
$c$ orthogonal states that each have energy at most $a$. Thus we
propose the following protocol.

Arthur demands that Merlin give him the state
\begin{equation}
\label{witness}
\ket{W} = \frac{1}{\sqrt{c!}} \sum_{\pi \in S_c} \mathrm{sign}(\pi)
\ket{\psi_{\pi(1)}} \ket{\psi_{\pi(2)}} \ldots \ket{\psi_{\pi(c)}}.
\end{equation}
Arthur performs the projective measurement to see that the state given
to him by Merlin lies in the antisymmetric subspace of
$\mathcal{H}^{\otimes c}$. If this fails he rejects. He then throws
away all but the first register and performs phase estimation of $H$
to precision better than $\epsilon$. If the state has energy above $b$
he rejects. Otherwise he accepts.

It is clear that for YES instances, Arthur will accept the state
$\ket{W}$ with high probability. (The only source of error is
imprecision in phase estimation.) We will next prove that for NO
instances the acceptance probability is at most $1-\frac{1}{c}$. Using
standard methods\cite{Kitaev_book,MW,NWZ} we can amplify this protocol to
obtain polynomially small acceptance probability for NO instances.

\begin{lemma}
\label{symm}
For any state $\ket{\phi}$ in the antisymmetric subspace of
$\mathcal{H}^{\otimes c}$ and any state $\ket{\alpha_1} \in
\mathcal{H}$, $\bra{\phi} \left( \ket{\alpha_1} \bra{\alpha_1} \otimes
\id \right) \ket{\phi} \leq \frac{1}{c}$ where $\id$ is the identity
operator on $\mathcal{H}^{\otimes (c-1)}$.
\end{lemma}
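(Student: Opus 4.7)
The plan is to exploit a Pauli-exclusion style argument: in an antisymmetric state of $c$ particles, at most one ``particle'' can occupy the mode $\ket{\alpha_1}$, so the expected number of registers in state $\ket{\alpha_1}$ is at most $1$, and by permutation symmetry each of the $c$ registers carries an equal share of this expectation.

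Concretely, I would first extend $\ket{\alpha_1}$ to an orthonormal basis $\{\ket{\alpha_i}\}$ of $\mathcal{H}$ and let $P_j$ denote the operator $\ket{\alpha_1}\bra{\alpha_1}$ acting on the $j\th$ register (tensored with identity on the remaining $c-1$ registers). The quantity to be bounded is $\bra{\phi}P_1\ket{\phi}$. Since $\ket{\phi}$ lies in the antisymmetric subspace, swapping registers $1$ and $j$ leaves $\ket{\phi}$ invariant up to an overall sign, and conjugating $P_1$ by this swap gives $P_j$. Therefore $\bra{\phi}P_j\ket{\phi}=\bra{\phi}P_1\ket{\phi}$ for every $j\in\{1,\dots,c\}$, so
\begin{equation}
c\,\bra{\phi}P_1\ket{\phi}=\bra{\phi}\left(\sum_{j=1}^{c}P_j\right)\ket{\phi}.
\end{equation}

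The second step is to show that $\sum_{j=1}^{c}P_j\leq \id$ on the antisymmetric subspace of $\mathcal{H}^{\otimes c}$. The cleanest way is to expand any antisymmetric $\ket{\phi}$ in the Slater-determinant basis built from $\{\ket{\alpha_i}\}$, i.e.\ the states
\[
\ket{\alpha_{i_1},\dots,\alpha_{i_c}}_{\wedge}=\frac{1}{\sqrt{c!}}\sum_{\pi\in S_c}\mathrm{sign}(\pi)\,\ket{\alpha_{i_{\pi(1)}}}\cdots\ket{\alpha_{i_{\pi(c)}}}
\]
with $i_1<i_2<\cdots<i_c$. A direct check shows that each Slater determinant is an eigenvector of $\sum_j P_j$ with eigenvalue equal to the number of indices $i_k$ equal to $1$, which is $0$ or $1$. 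Since these Slater determinants form an orthonormal basis of the antisymmetric subspace, $\sum_j P_j$ is bounded above by $\id$ there.

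Combining the two steps yields $c\,\bra{\phi}P_1\ket{\phi}\leq 1$, which is the claim. The only place that requires a moment's care is the ``exclusion'' bound $\sum_j P_j\leq\id$; the alternative, more operator-theoretic route is to note that the projectors $P_j$ pairwise commute and that $P_iP_j$ (for $i\neq j$) projects onto states symmetric in registers $i,j$, which are annihilated by the antisymmetrizer, so all products $P_iP_j$ vanish on the antisymmetric subspace. Either argument finishes the proof without further computation.
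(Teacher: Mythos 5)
Your proof is correct, and it takes a genuinely different route from the paper's. The paper works entirely inside the Slater-determinant basis $\{\ket{D_f}\}$ and directly computes the matrix elements of the single-register projector, $\bra{D_f}\left(\ket{\alpha_1}\bra{\alpha_1}\otimes\id\right)\ket{D_g}=\delta_{f,g}\,\delta_{f(1),1}/c$, so the factor $1/c$ emerges from the combinatorics of overlapping two determinants; the bound then follows by summing a probability distribution over $f$. You instead extract the $1/c$ from permutation symmetry---conjugating $P_1$ by a swap and using that $\ket{\phi}$ changes only by a sign---which reduces the lemma to the Pauli-exclusion statement $\sum_{j}P_j\leq\id$ on the antisymmetric subspace. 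Your first justification of that statement (diagonalizing the number operator on Slater determinants) is essentially the paper's computation applied to $\sum_j P_j$ rather than $P_1$, and is slightly cleaner since no $1/c$ appears in the matrix element. Your second justification is the more interesting divergence: since $P_iP_j\ket{\phi}$ for $i\neq j$ is simultaneously symmetric and antisymmetric under the swap of registers $i$ and $j$, it vanishes, so $N=\sum_j P_j$ satisfies $N^2=N$ on the (invariant) antisymmetric subspace and is therefore a projector there. This avoids any basis expansion and makes transparent that the bound is exactly the statement that at most one particle occupies the mode $\ket{\alpha_1}$; the paper's version, by contrast, hands you the explicit resolution of the expectation value as $\frac{1}{c}\sum_f p_f\,\delta_{f(1),1}$, which is marginally more informative about when equality is approached. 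Both are complete proofs.
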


\begin{proof}
Extend $\ket{\alpha_1}$ to an orthonormal basis
$\ket{\alpha_1},\ket{\alpha_2},\ldots,\ket{\alpha_{2^n}}$ for
$\mathcal{H}$. Let $F$ be the set of functions $f:\{1,2,\ldots,c\} \to
\{1,2,\ldots,2^n\}$ such that $f(1) < f(2) < \ldots < f(c)$. Thus $|F|
= \binom{2^n}{c}$. For any $f \in F$ we have the corresponding Slater
determinant state.
\[
\ket{D_f} = \frac{1}{\sqrt{c!}} \sum_{\pi \in S_c} \mathrm{sign}(\pi)
\ket{\alpha_{f(\pi(1))}} \ket{\alpha_{f(\pi(2))}} \ldots
\ket{\alpha_{f(\pi(c))}}
\]
It a standard result that these $\binom{2^n}{c}$ states form a
complete orthonormal basis for the antisymmetric subspace of
$\mathcal{H}^{\otimes c}$. For any $f,g \in F$ we have
\begin{equation}
\label{deltas}
\bra{D_f} \left( \ket{\alpha_1} \bra{\alpha_1} \otimes \id \right)
\ket{D_g} = \frac{ \delta_{f,g} \delta_{f(1),1}}{c}
\end{equation}
where each $\delta$ denotes a generalized Kronecker-$\delta$. Because
$\ket{\phi}$ is antisymmetric, it can be decomposed in the Slater
determinant basis:
\[
\ket{\phi} = \sum_{f \in F} \phi_f \ket{D_f}
\]
and $p_f = \phi^*_f \phi_f$ is a corresponding probability
distribution on $F$. Thus
\[
\bra{\phi} \left( \ket{\alpha_1} \bra{\alpha_1} \otimes \id \right)
\ket{\phi} =
\] 
\[
\sum_{f,g \in F} \phi^*_f \bra{D_f} \left( \ket{\alpha_1}
\bra{\alpha_1} \otimes \id \right) \ket{D_g} \phi_g.
\]
By equation \ref{deltas} this is
\[
= \frac{1}{c} \sum_{f \in F} p_f \delta_{f(1),1} \leq \frac{1}{c}.
\]
\end{proof}

The quantity
\[
p_j(\phi) = \bra{\phi} \left( \ket{\psi_j}\bra{\psi_j} \otimes \id \right)
\ket{\phi}
\]
is the probability of obtaining $\ket{\psi_j}$ if we measure the first
register of a state $\ket{\phi}$ in the eigenbasis of $H$. By lemma
\ref{symm}
\[
\sum_{j=1}^{c-1} p_j(\phi) \leq 1 - \frac{1}{c}.
\]
Thus, with probability at least $\frac{1}{c}$, such a measurement
would yield $\ket{\psi_j}$ with $j > c-1$. Thus if $\lambda_c > b$
then with probability at least $\frac{1}{c}$ a measurement 
of the observable $H$ would yield energy at least $b$. The phase
estimation algorithm can in $\mathrm{poly}(1/\epsilon)$ time perform
such an energy measurement with exponentially small chance of making
an error as large as $\epsilon$. Thus the protocol is sound, which
completes the proof that the $(c,k,\epsilon)$-energy problem is
QMA-complete for constant $c$ and $k$ and polynomially small
$\epsilon$.

\section{Discussion and conclusions}

The results presented in this paper have several applications. Although
calculating the ground state energy of stoquastic Hamiltonians appears
easier than calculating the ground energy of generic Hamiltonians,
our results suggest that calculating other eigenstates of stoquastic
Hamiltonians remains hard. Because the wavefunctions of these states
have amplitudes which are both positive and negative, the hardness
of determining their energy supports the intuition that it is the
positivity of the amplitudes which makes the ground state problem
for stoquastic Hamiltonians easier. An extreme distinction between stochastic 
and stoquastic Hamiltonians arises when the Hamiltonians are also frustration free.
 Although adiabatic evolution with stoquastic frustration free Hamiltonians
is simulable in  BPP~\cite{Bravyi_Terhal}, we have shown that adiabatic evolution in 
the ground state of a stochastic frustration free Hamiltonian is universal.

Secondly, these results may be relevant for the physical implementation
of quantum computers. The first proof of universality of adiabatic
quantum computation used 5-local interactions~\cite{Aharonov_adiabatic}.
Since then, the Hamiltonians have been brought into incrementally
more physically feasible form by various techniques while retaining
universality~\cite{Kempe,Kempe_Regev,Oliveira}. The universal Hamiltonian
$H_{XZ}$ of~\cite{Biamonte} is one outcome of this chain of reductions.
Here we add one more step to this chain, obtaining universal stochastic
and stoquastic Hamiltonians which resemble those arising in some systems
of superconducting qubits~\cite{Bravyi1}. The constructions given here
are at least three local, and so would require the use of perturbative
gadgets to implement in terms of physical two-local interactions.

Finally, our results are of interest from a purely
complexity-theoretic point of view. Stochastic matrices arise outside
the context of quantum mechanics, in Markov chains. Our reduction
shows that finding the lowest eigenvalue of a certain class of
exponentially large but efficiently describable doubly-stochastic
matrices is QMA-complete. (These stochastic matrices correspond to Markov
Chains in which the ``update rule'' is a probabilistic selection over
some set of updates which are local in the tensor product sense.) In
general, the problem of finding eigenvalues of stochastic matrices is
of interest because the eigenvalue of second largest magnitude
determines the mixing time of the corresponding Markov chain. There
exist Markov chains in which the eigenvalue of second largest
magnitude is negative, and is the lowest lying eigenvalue. We hope
that the demonstration of a QMA-complete problem arising in a
classical setting will help shed further light on the class QMA
itself.

We thank Sergey Bravyi for a helpful discussion. SJ thanks MIT's
Center for Theoretical Physics, RIKEN's Digital Materials Laboratory,
and Caltech's Institute for Quantum Information, Franco Nori, Sahel
Ashhab, ARO/DTO's QuaCGR program, the DOE, the Sherman Fairchild
Foundation, and the NSF under grant number PHY-0803371. DG
thanks Eddie Farhi, the W. M. Keck Foundation Center for
Extreme Quantum Information Theory, and the Natural Sciences and
Engineering Research Council of Canada. PJL thanks John Preskill and
the Institute of Quantum Information at Caltech for hosting an
extended visit during which part of this work was completed. This
research was supported in part by the National Science Foundation
under Grant No. PHY05-51164.

\bibliography{stoquastic}

\end{document}